\newtheorem{Lem}{Lemma}
\newtheorem{Thm}{Theorem}
\newcommand{\mA}{\mathcal{A}}
\newcommand{\mG}{\mathcal{G}}
\newcommand{\mS}{\mathcal{S}}
\newcommand{\QQ}{\mathbb{Q}}
\renewcommand{\paragraph}[1]{\vspace{.1cm}\noindent\emph{#1~~}}
\begin{document}
\makeRR   


\section{Introduction}
\label{subsec:p2pnetworks}

\paragraph{Motivation} In most current peer-to-peer (P2P) solutions
participants are encouraged to cooperate with each other.
Since collaborations may be costly in terms of network resources
(connection establishment, resource consumption, maintenance), the
number of connections is often bounded by the protocol. This
constraint encourages the clients to make a careful choice among others to
obtain a good performance from the system. The possibility to choose a
better partner implies that there exists \emph{a preference system}, which
describes the interests of each peer.

The study of such preference systems is the subject of $b$-matching
theory. It has started forty-five years ago with the seminal work of
Gale and Shapley on \emph{stable
marriages}~\cite{gale62college}. Although the original paper had a
certain \emph{recreational mathematics} flavor, the model turned out
to be especially valuable both in theory and practice. Today,
$b$-matching's applications are not limited to dating agencies, but
include college admissions, roommates attributions, assignment of
graduating medical students to their first hospital appointments, or
kidney exchanges
programs~\cite{gale62college,irving00hospitals,irving02stable,roth05pairwise}.
The goal of the present paper is to expand $b$-matching application domain
to P2P networks by using it to model the interactions between the
clients of such networks.


 \paragraph{Previous work} In \cite{inoc} we covered
general aspects of the $b$-matching theory application to the dynamics
of the node interactions. We considered preference systems natural for
the P2P paradigm, and showed that most of them fall into three
categories: global, symmetric, and complementary. We demonstrated that
these systems share the same property: acyclicity. We proved existence
and uniqueness of a stable configuration for acyclic preference
systems.

\paragraph{Contribution} In this article, we analyze
the links between properties of local marks and the preference lists
that are generated with those marks. We show that all acyclic systems
can be created with symmetric marks. We provide a method to merge any
two acyclic preference systems and retain the acyclic property. And
finally our simulations show that real latency marks create
collaboration graphs with small-worlds properties, in contrast with
random symmetric or global marks.

\paragraph{Roadmap} In Section~\ref{sec:model} we define the
global, symmetric, complementary, and acyclic preference systems, and
provide a formal description of our model.  In
Section~\ref{subsec:equiv} we demonstrate that all acyclic preferences
can be represented using symmetric preferences. We consider
complementary preferences in Section~\ref{sec:prop}, and the results
are extended to any linear combination of global or symmetric
systems. Section~\ref{sec:smallworlds} discusses the
properties of a stable solution providing an example based on Meridian
project measurements \cite{meridianproject}. In Section~\ref{disc} we discuss the impact of our results, and Section~\ref{sec:conclusion} concludes.


\section{Definition and applications of P2P preference systems}
\label{sec:model}
\subsection{Definitions and general modeling assumptions}\label{sec:qq}

We formalize here a $b$-matching model for common P2P preference systems.\\

\paragraph{Acceptance graph} Peers may have a partial knowledge
of the network and are not necessarily aware of all other
participating nodes. Peers may also want to avoid collaboration with
certain others. Such criteria are represented by an \emph{acceptance} graph
$G(V,E)$. Neighbors of a peer $p\in V$ are the nodes that may
collaborate with $p$. A configuration $C$ is a subset
$\tilde{C}\subset E$ of the existing collaborations at a given time.\\


\paragraph{Marks} We assume peers use some real marks (like latency, bandwidth,\ldots) to rank their neighbors. This is represented by a  valued matrix of marks $m=\{m(i,j)\}$. A peer $p$ uses $m(p,i)$ and $m(p,j)$ to compare $i$ and $j$. Without loss of generality, we assume that $0$ is the best mark and $m(p,i) < m(p,j)$ if and only if $p$ prefers $i$ to $j$. If $p$ is not a neighbor of $q$, then $m(p,q)=\infty$. We assume for convenience a peer $p$ has a different mark for each of its neighbors. It implies that a peer can always compare two neighbors
and decide which one suits better to him.\\

\paragraph{Preference system} A mark matrix $M$ creates an instance $L$ of a preference system. $L(p)$ is a preference list that indicates how a peer $p$ ranks its neighbors. The relation when $p$ prefers $q_1$ to $q_2$ is denoted by $L(p,q_1)<L(p,q_2)$. Note that different mark matrices can produce the same preference system.\\

\paragraph{Global preferences}
A preference system is global if it can be deduced from global marks ($m(i,p) = m(j,p) = m(p)$).\\

\paragraph{Symmetric preferences} 
A preferences system is symmetric if it can be deduced from symmetric marks ($m(i,j)=m(j,i)$ for all $i,j$).\\

\paragraph{Complementary preferences} 
A preferences system is complementary if it can be deduced from marks of the form $m(i,j)=v(j)-c(i,j)$, where $v(j)$ values the resources possessed by $j$ and $c(i,j)$ the resources that $i$ and $j$ have in common\footnote{Of course, in this case the preferred neighbor has a larger mark.}.\\

\paragraph{Acyclic preferences} 
A preferences system is acyclic if it contains no preference cycle. A preference cycle is a cycle of at least three peers such that eacu peer strictly prefers its successor to its predecessor along the cycle.\\

\paragraph{Quotas} Each peer $p$ has a quota $b(p)$ (possibly infinite) on the number of links it can support. A \emph{$b$-matching} is a configuration $C$ that respects the quotas. If the quotas are greater than the number of possible collaborations, then simply $C = E$ would be an optimal solution for all.\\

\paragraph{Blocking pairs} We assume that the nodes aim to improve their situation, i.e. to link to most preferred neighbors. A pair of neighbors $p$ and $q$ is \emph{a blocking pair} of a configuration
$C$ if $\{p,q\}\in E\setminus C$ and both prefer to change the configuration $C$ and to link with the each other.  We assume that system evolves by discrete steps. At any step two nodes can be linked together if and only if they form a blocking pair. Those nodes may drop their worst performing links to stay within their quotas. A configuration $C$ is \emph{stable} if no blocking pairs exist.

\paragraph{Loving pair} Peers $p,q$ form a loving pair if $p$ prefers $q$ to all its
other neighbors and $q$, in its turn, prefers $p$ to all other neighbors. It implies a strong link which cannot be destroyed in the given preference system.


\subsection{Preference systems and application design}

Depending on the P2P application, several important criterion can be
used by a node to choose its collaborators. We introduce the following
three types as representative of most situations:

\paragraph{Proximity:} distances in the physical network, in a virtual space or similarities according to some characteristics,

\paragraph{Capacity related:} network bandwidth, computing capacity, storage capacity,

\paragraph{Distinction:} complementary character of resources owned by
different peers.

Notice that theses types correspond respectively to the definitions of symmetric, global and complementary preference system categories.

Examples of symmetric preferences are P2P applications which optimize latencies.  A classical approach for distributed hash-table lists of contacts is selecting the contacts with the smallest round trip time (RTT) in the physical network. In Pastry~\cite{pastry}, a node will always prefer contacts with smallest RTT among all the contacts that can fit into a given routing table entry.  More generally, building a low latency overlay network with bounded degree requires to select neighbors with small RTTs.  Optimizing latencies between
players can also be crucial, for instance, for online real-time gaming
applications~\cite{syncms}. Such preferences are symmetric since the mark a peer $p$ gives to some peer $q$ is the same as the mark $q$ gives to $p$ (the RTT between $p$ and $q$).

Similarly, massively multiplayer online games (MMOG) require
connecting players with nearby coordinates in a virtual space~\cite{keller03simon,mmog}.  Again this can be modeled by symmetric
preferences based on the distance in the virtual space. Some authors also propose to connect participants of a file sharing system
according to the similarity of their interests~\cite{fessant04clustering,sripanidkulchai03efficient}, which
is also a symmetric relation.

%
%

BitTorrent~\cite{cohen03incentives} is an example of a P2P application
that uses a capacity related preference system. In brief, a BitTorrent peer uploads to peers it has most downloaded from during the last ten seconds. This is an implementation of the well known Tit-for-Tat strategy. The mark of a peer can thus be seen has its upload capacity divided by its collaboration quota.



This global preference nature of BitTorrent should be tempered by the
fact that only peers with complementary parts of the file are
selected. Pushing forward this requirement would lead to another
selection criterion for BitTorrent: preference for the peers possessing the
most complementary set of file pieces. In other words, each peer
should try to exchange with peers possessing a large number of blocks
it needs. We call this a complementary preference system. Note, that
this kind of preferences changes continuously as new pieces are
downloaded.  However, the peers with the most complementary set of blocks
are those, who enable longest exchange sessions.

In its more general form, the selection of partners for cooperative file download can be seen as a mix of several global, symmetric, and complementary preference systems.

\section{Acyclic preferences equivalence}
\label{subsec:equiv}

In \cite{inoc}, we showed that global, symmetric and complementary preferences are acyclic, and that any acyclic $b$-matching preference instance has a unique stable configuration. And acyclic systems always converge toward their stable configuration. However, since acyclicity is not defined by construction, one can ask if other kind of acyclic preferences exist. This section is devoted to answer this question.



\begin{Thm}\label{grandTh}
Let $P$ be a set of $n$ peers, $\mA$ be the set of all possible
acyclic preference instances on $P$, $\mS$ be the set of all possible
symmetric preference instances on $P$, $\mG$ be the set of all possible
global preference instances, then

$$\mG\subsetneqq\mA = \mS $$
\end{Thm}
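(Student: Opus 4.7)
The plan is to split the statement into its four constituent claims: (i) $\mG\subseteq\mA$, (ii) $\mS\subseteq\mA$, (iii) $\mG\subsetneq\mA$, and (iv) $\mA\subseteq\mS$. Claims (i) and (ii) are exactly the acyclicity results of \cite{inoc}, so nothing new is needed. For (iii) I would exhibit a single symmetric (hence, by (ii), acyclic) instance that is demonstrably not global: for example, the complete acceptance graph on four peers $\{a,b,c,d\}$ with a symmetric mark matrix chosen so that $a$'s order on $\{c,d\}$ and $b$'s order on $\{c,d\}$ disagree, which prevents any global ranking of the peers from inducing all four preference lists simultaneously. Since $\mS\subseteq\mA$, this instance lies in $\mA\setminus\mG$.

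The real content is (iv). Given an acyclic preference instance $L$ on an acceptance graph $G=(V,E)$, I would build symmetric marks by assigning a real weight to each edge in a way compatible with every peer's local ranking. To do this, define a binary relation $\prec$ on $E$ by
$$e_1 \prec e_2 \quad\text{iff}\quad e_1\cap e_2=\{p\}\text{ for some peer }p\text{ who prefers the other endpoint of }e_1\text{ to that of }e_2.$$
If one can show that $\prec$ is acyclic as a directed relation on $E$, then any linear extension of $\prec$ assigns to each edge $\{p,q\}\in E$ a distinct real value $m(\{p,q\})$; setting $m(p,q)=m(q,p)$ equal to this value (and $m(p,q)=\infty$ when $\{p,q\}\notin E$) gives a symmetric mark matrix. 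For any peer $p$ and neighbors $q_1,q_2$, the edges $\{p,q_1\}$ and $\{p,q_2\}$ are directly comparable by $\prec$ through $p$ itself, so the induced preferences coincide with $L(p)$. Hence $L\in\mS$.

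The main obstacle, and the heart of the proof, is establishing acyclicity of $\prec$ from acyclicity of $L$. Suppose for contradiction a directed cycle $e_1\prec e_2\prec\cdots\prec e_k\prec e_1$ exists. Since two distinct edges meet in at most one vertex, there is no $2$-cycle, so $k\geq 3$. Each step $e_i\prec e_{i+1}$ is witnessed by a shared peer $u_i\in e_i\cap e_{i+1}$. In the generic situation where consecutive witnesses are distinct, $e_i=\{u_{i-1},u_i\}$ and the definition of $\prec$ means that peer $u_i$ prefers $u_{i-1}$ to $u_{i+1}$; reading the sequence $u_0,u_1,\ldots,u_{k-1}$ in reverse order then produces exactly a preference cycle of length $k\geq 3$ in the sense of the paper. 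If instead two consecutive witnesses coincide, say $u_{i-1}=u_i=p$, then three edges meet at $p$ and the local order of $p$ is transitive, so $e_{i-1}\prec e_{i+1}$ holds directly and we can shorten the cycle by deleting $e_i$. Iterating this reduction terminates, at which point the remaining cycle involves only distinct consecutive witnesses and therefore yields a genuine preference cycle on at least three distinct peers, contradicting acyclicity of $L$. The technicalities to double-check are that the shortening procedure never collapses the cycle below length $3$ (which follows because any collapse to length $2$ would again require two distinct edges sharing two vertices) and that the resulting peer cycle consists of distinct peers (handled by extracting a minimal subcycle if repetitions appear).
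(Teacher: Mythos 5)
Your proposal is correct, and for the substantive inclusion $\mA\subseteq\mS$ it takes a genuinely different route from the paper. The paper proves $\mA\subseteq\mS$ constructively: it invokes the loving-pair lemma (every nontrivial acyclic instance has a pair of peers who are each other's first choice), then runs a greedy algorithm that repeatedly extracts a loving pair, assigns it the next integer as a symmetric mark, deletes the pair from each other's lists, and recurses; this yields an explicit $O(n^2)$ procedure. You instead lift the preferences to a precedence relation $\prec$ on the edge set, prove $\prec$ is acyclic by converting a hypothetical $\prec$-cycle into a preference cycle of peers (after shortening away repeated consecutive witnesses via transitivity), and take any linear extension as the symmetric mark. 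The two arguments are two faces of the same fact: a $\prec$-minimal edge is exactly a loving pair, so your DAG claim re-proves Lemma~\ref{lemGai} rather than citing it, and the paper's greedy mark assignment is precisely one way of computing a linear extension of your $\prec$. What your route buys is self-containedness and a cleaner structural statement (acyclic instances are exactly those induced by a single global ordering of \emph{edges}); what the paper's route buys is an explicit algorithm with a complexity bound, which it then reuses in Section~\ref{sec:prop} to build tie-less composite instances. Two details in your sketch deserve tightening, though neither is a gap in the approach. First, your stated reason that $\prec$ has no $2$-cycle is incomplete: two distinct edges sharing at most one vertex only forces both witnesses to be the same peer $p$; you then need strictness of $p$'s preferences to rule out $p$ preferring each endpoint to the other. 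Second, the subcycle extraction when a witness repeats non-consecutively (say $v_a=v_b$) is not automatic: the spliced condition at the repeated peer ($v_a$ prefers $v_{a+1}$ to $v_{b-1}$, or the analogous condition in the complementary subcycle) must be established by a short case analysis using transitivity of $v_a$'s order --- exactly one of the two subcycles inherits a valid preference-cycle structure, and your edge-sharing observation correctly guarantees neither subcycle degenerates below length $3$. With those two points written out, your proof is complete. The remaining parts match the paper: $\mG,\mS\subseteq\mA$ is the mark-summation argument cited from the prior work, and your four-peer non-global symmetric instance is the same idea as the paper's counterexample, where two peers rank the same pair of peers in opposite orders.
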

The rest of this section consists of the proof of theorem
\ref{grandTh}: we will first show $\mS\subset \mA$ and $\mG\subset \mA$, then $\mA
\subset \mS$, which will be followed by $\mG\neq\mA$.
\begin{Lem}
\label{thm:glob_acycl}
Global and symmetric preference systems are acyclic
\end{Lem}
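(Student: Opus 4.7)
The proof plan is to argue by contradiction in both cases, exploiting the fact that the inequalities enforced by a preference cycle, once translated into inequalities on numeric marks, produce an infinite descent on a finite set.

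Suppose a preference cycle $p_0, p_1, \ldots, p_{k-1}$ exists (indices modulo $k$, with $k \geq 3$). By definition, for every $i$ one has $L(p_i, p_{i+1}) < L(p_i, p_{i-1})$, i.e.\ $p_i$ strictly prefers $p_{i+1}$ to $p_{i-1}$, which in terms of marks reads $m(p_i, p_{i+1}) < m(p_i, p_{i-1})$.

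For the symmetric case, I would define the weight of the cycle edge between $p_i$ and $p_{i+1}$ as $w_i := m(p_i, p_{i+1})$. Symmetry of $m$ ensures $w_i = m(p_{i+1}, p_i)$, so the preference inequality at vertex $p_i$ involves the two cycle edges incident to $p_i$ and becomes simply $w_i < w_{i-1}$ for all $i$. Then I would pick $j$ minimizing $w_j$ over the $k$ cycle edges and observe that the inequality at $p_{j+1}$ gives $w_{j+1} < w_j$, which contradicts the choice of $j$. This is the main (and only real) step; the slight subtlety is making sure one reads the symmetric mark as a well-defined label on the undirected edge so that the two inequalities coming from its two endpoints collapse into the single requirement $w_i < w_{i-1}$.

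For the global case, marks depend only on the judged peer: $m(p_i, p_j) = m(p_j)$. The inequality at $p_i$ then reduces to $m(p_{i+1}) < m(p_{i-1})$, an inequality that no longer mentions $p_i$. I would pick $j$ minimizing $m(p_j)$ among the $k$ peers of the cycle; applying the inequality at index $i = j+1$ yields $m(p_{j+2}) < m(p_j)$, and since $k \geq 3$ the peer $p_{j+2}$ lies on the cycle (it is either distinct from $p_j$ or, when $k = 3$, equals $p_{j-1}$, which is also on the cycle), contradicting minimality.

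Neither step presents a genuine obstacle; the only thing worth spelling out carefully is the translation from the preference-list inequality to the mark inequality, together with the remark that for global marks the quantity compared depends on the judged peer only, and for symmetric marks it depends only on the unordered pair. Once this is in place, in each case a minimum-taking argument on a finite set closes the proof.
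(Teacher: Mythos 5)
Your proof is correct, and it closes the contradiction by a different mechanism than the paper. Both arguments begin identically: assume a preference cycle and translate the preference-list inequalities into mark inequalities, noting that for global marks the compared quantity depends only on the judged peer and for symmetric marks only on the unordered pair. The paper then sums the $k$ inequalities and reindexes, so that both sides of the strict inequality become the same sum ($\sum_i m(p_i)$ in the global case, $\sum_i m(p_i,p_{i+1})$ in the symmetric case), yielding the contradiction arithmetically. You instead use an extremal argument: the cyclic system of strict inequalities ($w_i < w_{i-1}$ on edge weights in the symmetric case, $m(p_{i+1}) < m(p_{i-1})$ in the global case) cannot have a minimum element, since the inequality at the successor of the minimizer strictly undercuts it. Your route is marginally more general: it uses only the total order on marks, never addition, so it would apply verbatim to marks taken from an arbitrary totally ordered set, whereas the summation argument needs arithmetic on the mark values (and would require a word of care if infinite marks could enter a sum). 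A side benefit in the global case is that your minimum-taking handles the step-two chain $m(p_{i+1}) < m(p_{i-1})$ uniformly, without worrying that for even $k$ it splits into two separate cyclic chains over even- and odd-indexed peers; the paper's summation likewise absorbs this silently. The paper's version buys brevity (one summation handles both cases in two lines); yours buys independence from the algebraic structure of the marks. Either is a complete and valid proof of the lemma.
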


\begin{proof} \emph{from~\cite{inoc}}
Let us assume the contrary, and assume that there is a circular list of peers $p_1,\ldots,p_k$ (with $k\geq 3$), such that each peer of the list strictly prefers its successor to its predecessor. Written in the form of marks it means that $m(p_i,p_{i+1})<m(p_i,p_{i-1})$ for all $i$ modulo $k$. Taking a sum for all possible $i$, we get $$\sum_{i=1}^k m(p_i,p_{i+1})<\sum_{i=1}^k
m(p_i,p_{i-1})\text{.}$$
\noindent If marks are global, this can be rewritten $\sum_{i=1}^k m(p_i)<\sum_{i=1}^k
m(p_i)$, and if they are symmetric, $\sum_{i=1}^k m(p_i,p_{i+1})<\sum_{i=1}^k
m(p_i,p_{i+1})$. Both are impossible, thus global and symmetric marks create acyclic instances.\qed
 
\end{proof}

The next part, $\mA\subset \mS$, uses the loving pairs described in \ref{sec:qq}. We first prove the existence of loving pairs in Lemma~\ref{lemGai}.

\begin{Lem}\label{lemGai}
A nontrivial acyclic preference instance always admits at least one loving pair.
\end{Lem}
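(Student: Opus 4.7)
The plan is to produce a loving pair by following the ``favorite neighbor'' map and using acyclicity to rule out long cycles in that map.

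First I would introduce notation: for any peer $p$ that has at least one neighbor, let $f(p)$ denote the neighbor of $p$ that $p$ ranks best (this is well-defined because marks are pairwise distinct, as assumed in Section~\ref{sec:qq}). Nontriviality means the acceptance graph has at least one edge, so there exists a peer $p_0$ for which $f(p_0)$ exists; note that whenever $f(p)=q$, peer $q$ has $p$ as a neighbor and thus $f(q)$ is itself defined. Hence starting from $p_0$ we may form an infinite sequence $p_0, p_1, p_2, \ldots$ with $p_{i+1}=f(p_i)$.

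Since $P$ is finite, this sequence must eventually be periodic. After discarding the pre-period, I get a cycle of peers $p_0, p_1, \ldots, p_{k-1}$ with $p_{(i+1)\bmod k} = f(p_i)$ for all $i$, and $k\geq 2$ (there are no self-loops in the acceptance graph, so $f(p)\neq p$). If $k=2$, then $f(p_0)=p_1$ and $f(p_1)=p_0$: the pair $\{p_0,p_1\}$ is exactly a loving pair, and we are done.

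The main step is to rule out the case $k\geq 3$. In that case the cycle $p_0,\ldots,p_{k-1}$ has at least three distinct peers. For every index $i$, peer $p_i$ has $p_{i+1}=f(p_i)$ as its top-ranked neighbor, so in particular $p_i$ strictly prefers $p_{i+1}$ to $p_{i-1}$ (both reductions taken modulo $k$), which is a neighbor of $p_i$ since $f(p_{i-1})=p_i$. This is precisely a preference cycle of length $k\geq 3$, contradicting the acyclicity assumption. Hence $k=2$ and the loving pair exists.

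The only subtle point I anticipate is the bookkeeping: I must make sure (i) that $f$ is defined all along the chain (handled by the remark that $f(p)=q$ forces $q$ to have at least one neighbor, namely $p$), and (ii) that the cycle I extract has length at least~$2$ so that its $k=2$ case genuinely furnishes a mutual favorite. Neither of these should require anything beyond the assumptions already recorded in Section~\ref{sec:qq}, so the whole argument is short and self-contained.
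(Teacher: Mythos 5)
Your proof is correct and follows essentially the same route as the paper, which also sketches the argument via a sequence of first choices: iterating the favorite-neighbor map must close into a cycle, and acyclicity forces that cycle to have length two, i.e.\ to be a loving pair. Your write-up merely makes explicit the bookkeeping (well-definedness of $f$ along the chain, exclusion of self-loops, distinctness of peers in the extracted cycle) that the paper's one-line sketch leaves implicit.
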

\begin{proof}
A formal proof was presented in \cite{inoc}. In short, if there is no loving pair, one can construct a preference cycle by considering a sequence of first choices of peers.
\end{proof}

\begin{algorithm2e}[htb]
\caption{Construction of a symmetric note matrix $m$  given an acyclic
preferences instance $L$ on $n$ peers  \label{algo:notes}} \dontprintsemicolon 
$N := 0$\;
for all $p$ and $q$, $m(p,q)=+\infty$ (by default, peers do not accept each other)\;
\While{there exists a loving pair $\{a,b\}$}{
  $m(a,b):=m(b,a):=N$\;
  Remove $a$ from the preference list $L(b)$ and $b$ from  $L(a)$\;
  $N:=N+1$\;
}
\end{algorithm2e}

\begin{Lem}
Let $L$ be a preference instance. Algorithm~\ref{algo:notes}
constructs a symmetric mark matrix in $O(n^2)$ time that produces
$L$.
\end{Lem}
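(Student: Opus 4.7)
The plan is to verify three things: that the algorithm terminates and assigns a finite mark to every pair of neighbors, that the resulting matrix reproduces $L$ (symmetry being immediate from the joint assignment $m(a,b):=m(b,a):=N$), and that the whole procedure runs in $O(n^2)$ time. Throughout I assume $L$ is acyclic, as stated in the algorithm's input specification, so that Lemma~\ref{lemGai} can be applied.

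For termination I would first observe that removing entries from an acyclic preference instance cannot create a cycle: any cycle $p_1,\ldots,p_k$ in the residual instance uses only preference relations still present, hence already present in the original, contradicting its acyclicity. So after each extraction the residual instance remains acyclic, and by Lemma~\ref{lemGai} it admits a loving pair as long as it is nontrivial. The loop therefore stops only when all preference lists are empty, that is, when every edge of the acceptance graph has received a finite mark.

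To show the marks reproduce $L$, I would fix a peer $p$ and let $\{p,q_1\},\{p,q_2\},\ldots$ be the loving pairs involving $p$ in the order they are extracted, receiving marks $N_1<N_2<\cdots$. By definition of a loving pair, $q_i$ is the first element of $L(p)$ at step $N_i$, and the only entries previously deleted from $L(p)$ are $q_1,\ldots,q_{i-1}$. Consequently $q_i$ is the $i$-th most preferred neighbor of $p$ in the original $L(p)$, and the increasing marks $m(p,q_1)<m(p,q_2)<\cdots$ recover $L(p)$ exactly. Non-neighbors of $p$ retain the mark $+\infty$ and do not appear in the induced list, matching $L(p)$ as well.

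For the complexity, I would represent each $L(p)$ as a doubly linked list with a pointer $f(p)$ to its current head. An initial $O(n^2)$ pass initializes the lists, computes every $f(p)$, and enqueues all pairs $\{p,q\}$ with $f(p)=q$ and $f(q)=p$. When $\{a,b\}$ is popped and processed, only $L(a)$ and $L(b)$ change, so only $f(a)$ and $f(b)$ need updating, and the only new loving-pair candidates are $\{a,f(a)\}$ and $\{b,f(b)\}$, each checked in $O(1)$ by testing whether $f(f(a))=a$ and $f(f(b))=b$. Since at most $\binom{n}{2}$ loving pairs are extracted and each iteration is constant work, the total cost is $O(n^2)$. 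The main obstacle is precisely this accounting step: arguing that no other peer's first choice can be affected by processing $\{a,b\}$ is what makes amortized constant work per iteration possible and avoids the naive $O(n^3)$ bound one would get by recomputing all loving pairs from scratch after each removal.
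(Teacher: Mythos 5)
Your proof is correct and takes essentially the same approach as the paper's: extract loving pairs iteratively (using Lemma~\ref{lemGai}, with the observation that removals preserve acyclicity), note that the increasing marks reproduce each preference list, and obtain $O(1)$ work per iteration because only the first choices of $a$ and $b$ can change, so the only new loving-pair candidates involve $a$ or $b$ with their new first choices. Your write-up merely makes explicit the data structures and the invariant that the paper's terser proof leaves implicit.
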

\begin{proof}
The matrix output is clearly symmetric. Neighboring peers get finite
marks, while others have infinite marks.  If an instance contains a
loving pair $\{a,b\}$ then $m(a,b)=m(b,a)$ can be the best mark since
$a$ and $b$ mutually prefer to any other peers.  According to Lemma~
\ref{lemGai} such a loving pair always exists in acyclic case.  By
removing the peers $a$ and $b$ from their preferences lists, we are
lead to a smaller acyclic instance with the same preference lists except $a$ and $b$ are now unacceptable to each other. The process continues until all  preference lists are eventually empty.
 The marks are given in increasing order, therefore when  $m(p,q)$ and $m(p,r)$ are finite,  $m(p,q)<m(p,r)$ iff the loving pair $\{p,q\}$ is formed before the loving pair $\{p,r\}$ iff $p$ prefers $q$ to $r$. 

The algorithm runs in  $O(n^2)$ time because an iteration of the \textbf{while} loop takes $O(1)$ time.
A loving pair can especially be found in constant time by maintaining a list on all loving pairs. The list is updated in constant time since, after
$a$ and $b$ became mutually unacceptable, each new loving pair
contains either $a$ and its new first choice, or $b$  and its new first
choice.\qed
\end{proof}

\paragraph{All acyclic preferences are not global preferences.}
A simple counter-example uses $4$ peers $p_1$, $p_2$, $p_3$ and $p_4$
with the following preference lists:\\
$
L(p_1):\ p_2,p_3,p_4\ \ \ L(p_2):\ p_1,p_3,p_4\ \ \ 
L(p_3):\ p_4,p_1,p_2\ \ \ 
L(p_4):\ p_3,p_1,p_2
$

$L$ is acyclic, but $p_1$ prefers $p_2$ to $p_3$ whereas $p_4$ prefers $p_3$ to $p_2$. $p_1$ and $p_4$ rate $p_2$ and $p_3$ differently, thus the instance is not global.

\section{Complementary and Composite Preference Systems}
\label{sec:prop}

Complementary preferences appear in systems where peers are equally interested in the resources they do not have yet. As said in Section \ref{sec:qq}, complementary preferences can be deduced from marks of the form  $m(p,q)=v(q)-c(p,q)$ (in this case, marks of higher values are preferred).

The expression of a complementary mark matrix $m$ shows it is a linear combination of previously discussed global and symmetric mark matrices: $m=v-c$, where $v$ defines a global preference system and $c$ defines a symmetric system.

Theorem~\ref{th:lin} shows that complementary marks, and more generally any linear combination of global or symmetric marks, produce acyclic preferences.

\begin{Thm}\label{th:lin}
Let $m_1$ and $m_2$ be global or symmetric marks.  Any linear
 combination of $\lambda m_1+\mu m_2$ is acyclic.
\end{Thm}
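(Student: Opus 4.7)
The plan is to mimic the cycle-summation trick already used in the proof of Lemma~\ref{thm:glob_acycl} and to exploit the linearity of the mark combination. Since the conclusion of that lemma rests on an \emph{equality} of two sums (rather than on any sign-sensitive manipulation), it lifts cleanly through arbitrary real coefficients $\lambda$ and $\mu$.

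Concretely, I would argue by contradiction. Suppose $m := \lambda m_1 + \mu m_2$ admits a preference cycle $p_1,\ldots,p_k$ ($k\geq 3$), so that for all $i$ modulo $k$,
\begin{equation*}
m(p_i,p_{i+1}) < m(p_i,p_{i-1}).
\end{equation*}
Summing these $k$ strict inequalities yields
\begin{equation*}
\lambda\Bigl(\sum_{i} m_1(p_i,p_{i+1}) - \sum_{i} m_1(p_i,p_{i-1})\Bigr)
+ \mu\Bigl(\sum_{i} m_2(p_i,p_{i+1}) - \sum_{i} m_2(p_i,p_{i-1})\Bigr) < 0.
\end{equation*}
The key step is then to show that each parenthesized difference vanishes, for any mark function that is either global or symmetric.

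For a global mark $m_j$, one has $m_j(p_i,p_{i+1})=m_j(p_{i+1})$ and $m_j(p_i,p_{i-1})=m_j(p_{i-1})$, so both sums equal $\sum_i m_j(p_i)$ after reindexing the cycle. For a symmetric mark $m_j$, one has $m_j(p_i,p_{i+1})=m_j(p_{i+1},p_i)$; shifting the index $i \mapsto i-1$ turns $\sum_i m_j(p_{i+1},p_i)$ into $\sum_i m_j(p_i,p_{i-1})$, which is exactly the second sum. In either case, both summed quantities coincide and the corresponding difference is $0$.

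Plugging these two identities back in gives $0 < 0$, a contradiction, so no preference cycle can exist and $m$ is acyclic. I do not expect any real obstacle here: the only subtlety is to note that the argument goes through regardless of the signs of $\lambda$ and $\mu$, because we never multiply an inequality by a (possibly negative) scalar, we only exploit the vanishing of the two cycle differences. An easy induction on the number of summands then extends the result to any finite linear combination of global or symmetric marks, which in particular covers the complementary marks $m(p,q)=v(q)-c(p,q)$.
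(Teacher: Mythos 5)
Your proof is correct and takes essentially the same approach as the paper: argue by contradiction, sum the $k$ strict cycle inequalities, split the sum by linearity, and show each cycle difference vanishes by reindexing (the global case via $m_j(p_i,p_{i\pm1})=m_j(p_{i\pm1})$, the symmetric case via a cyclic shift), yielding $0<0$; your explicit remark that no inequality is ever multiplied by a scalar, so the signs of $\lambda$ and $\mu$ are irrelevant, is a point the paper leaves implicit. One caveat on your closing sentence: the extension to finitely many summands must rerun this direct argument rather than a genuine induction on the two-term statement, since $\lambda m_1+\mu m_2$ is in general neither global nor symmetric (and, as the paper's own counterexample shows, combinations of merely \emph{acyclic} systems need not be acyclic), but your parenthesized-differences formulation already applies verbatim to any finite number of global or symmetric terms.
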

\begin{proof}
The proof is practically the same as for Lemma~\ref{thm:glob_acycl}. Let us suppose that the preference system induced by $m=\lambda m_1+\mu m_2$ contains a preference cycle $p_1,p_2,\ldots, p_k, p_{k+1}=p_1$, for $k\geq 3$. We assume wlog that $m_1$ is global, $m_2$ symmetric and that marks of higher values are preferred for $m$. Then $m(p_{i},p_{i+1}) > m(p_i,p_{i-1})$  for all $i$ modulo $k$. Taking a sum over all possible $i$, we get
\begin{eqnarray*}
\sum_{i=1}^k m(p_{i},p_{i+1}) > \sum_{i=1}^k m(p_{i},p_{i-1})
 & = & \sum_{i=1}^k \left( \lambda m_1(p_{i},p_{i-1})+\mu m_2(p_{i},p_{i-1})\right)\text{, but}\\
\sum_{i=1}^k \left( \lambda m_1(p_{i},p_{i-1})+\mu m_2(p_{i},p_{i-1})\right) & = & \lambda \sum_{i=1}^k  m_1(p_{i-1})+\mu \sum_{i=1}^k m_2(p_{i-1},p_{i})\\
 & = & \lambda \sum_{i=1}^k  m_1(p_{i+1})+\mu \sum_{i=1}^k m_2(p_{i},p_{i+1})\\ &=&\sum_{i=1}^k m(p_{i},p_{i+1})\text{.}\\
\end{eqnarray*}
This contradiction proves the Theorem.
\qed \end{proof}

Theorem~\ref{th:lin} leads to the question whether any linear
combination of acyclic preferences expressed by any kind of marks is also acyclic. The example bellow illustrates that in general it is not true:
{\small
$$ M_1 = \left( \begin{array}{ccc} 0&3&1 \\ 2&0&1 \\ 3&1&0
\end{array}\right) \mbox{ , \hfill} M_2 = \left( \begin{array}{ccc} 0&1&2
\\ 1&0&3 \\ 1&2&0 \end{array}\right) 
 \mbox{ , \hfill}
M_1+M_2  =  \left( \begin{array}{ccc} 0&4&3 \\ 3&0&4 \\ 4&3&0
\end{array}\right)
$$
}
The preference instance induced by $M_1 + M_2$ has the cycle $1,2,3$,
while both $M_1$ and $M_2$ are acyclic (both produce global preferences).

Note, that a linear combination of two preference system matrices can
give duplicates in the marks of a single node, which generates ties in preferences. Ties affect existence and uniqueness of a stable configuration, depending on how they are handled. If a peer prefers a new node to a current collaborator that has the same mark, existence is not guaranteed (but if a stable configuration exists, it is unique). If not, existence stands, but not uniqueness\footnote{Irving and Manlove have performed a rather complete study on ties~\cite{irving02stable}.}.
  
However, Theorem \ref{th:lin} provides in conjunction with Theorem~\ref{grandTh} a way of constructing a tie-less acyclic instance that can take into account several parameters of the network, as long as all produce acyclic preferences. The parameters can be converted into integer symmetric marks using Algorithm~\ref{algo:notes}. A linear combination using $\QQ$-independent scalars produces distinct acyclic marks.


\section{Graph properties of stable configurations}
\label{sec:smallworlds}

Many protocols use preference systems that come from global and symmetric marks. Studying the properties of the stable configuration for such protocols may give information on the performances one can expect. In this Section, we study connectivity properties for three cases. Connectivity is extensively studied since Watts survey~\cite{SW03} on the \emph{small world} graphs. These graphs are  known to have good routing and robustness properties. They are characterized by a small (i.e. $O(\log(n)$) mean distances and high (i.e. $O(1)$) clustering. The \emph{clustering coefficient} is the probability for two vertices $x$ and $y$ to be linked, giving 
that $x$ and $y$ have at least one common neighbor.

The cases we considered all involved a set of $n=2500$ peers, and differ from the marks: the first uses a global mark matrix\footnote{In absence of tie, all global marks are the same up to permutation},the second a random symmetric mark matrix, and the last a latency mark matrix from the Meridian Project~\cite{meridianproject}.

\begin{figure}[ht]
\begin{center}
\subfigure[Diameter]{\includegraphics[width=.45\textwidth]{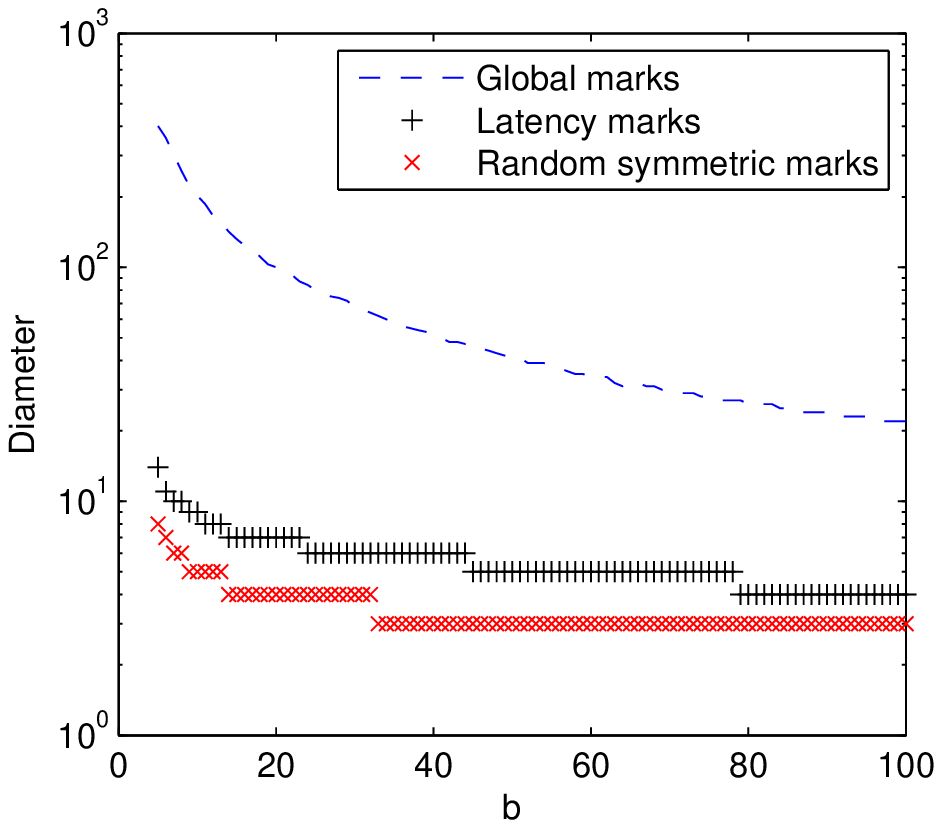}}\subfigure[Clustering coefficient]{\includegraphics[width=.45\textwidth]{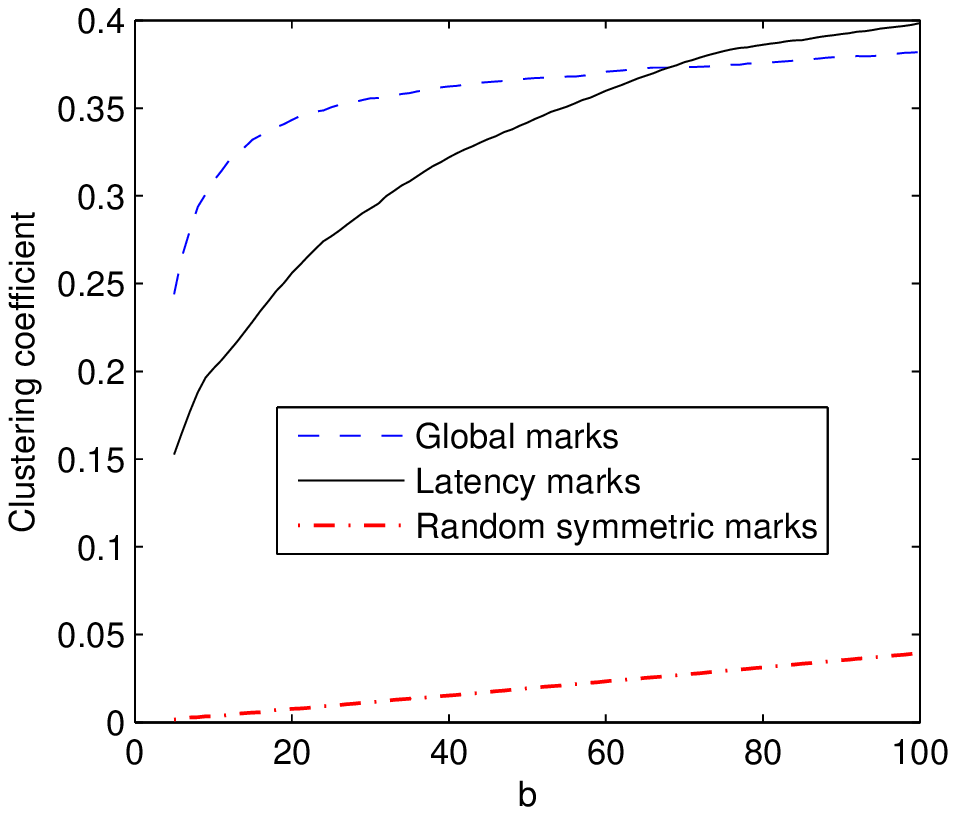}} \end{center}
\caption{Diameter and clustering coefficient of latency, random symmetric and global marks (2500 nodes) stable configurations. Global marks use an underlying Erdös-Rényi $\mG(2500,0.5)$ acceptance graph.}
\label{fig:mer1}
\end{figure}

Figure~\ref{fig:mer1} shows the properties of the stable configuration for these three marks, as a function of the quota $b$ on the number of links per peer.

Global marks produce configuration with disconnected cliques of size $b+1$ (maximal clustering, and infinite diameter). We had previously observed this \emph{clusterization} effect in~\cite{rr6081}. It can be lessened by using an Erdös-Rényi acceptance graph. Then the configuration still has a high clustering coefficient, and a high, but finite diameter (same order of magnitude than $\frac{n}{b}$). This is due to a \emph{stratification} effect: peers only link to peers that have marks similar to them~\cite{rr6081}.

Random symmetric matrix produces configurations with low diameter and clustering coefficient. There characteristics are similar to those of 
Erdös-Rényi graphs.

Real latencies produce both a low diameter and a high clustering coefficient. This indicates that the stable configuration has \emph{small-world} structure. But it is not a \emph{scale-free} network~\cite{newman03structure}, because the degree distribution does not follow a power law (the degrees are bounded by $b$).

\section{Discussion and future work}\label{disc}

\paragraph{Stability} Decision, whether a stable configuration is a good thing or not, depends on the characteristics and needs of practical applications. If continuous link alteration has a high cost (like in structured P2P networks), or if the stable configuration has appealing properties (like the small-world properties observed for latency-based stable configuration), then it is interesting to let the system converge. On the other hand, we observed that global marks result in a stable configuration with high diameter, which is an undesired feature in most cases. Moreover, some systems like gossip protocols\cite{allavena05correctness} take advantage of constant evolution of the corresponding acceptance graph.  In such cases, the eventual convergence would be harmful.

\paragraph{Convergence speed} The convergence speed is an important
characteristic, whether the stable solution is desired or not. In the
first case, the application is interested in speeding up the
process. In the second case, the slower possible speed is preferred
instead.  Although this question is out the scope of the present work,
our current experiments suggest that the convergence depends on many
parameters: the preference system used, the acceptance graph, the
activity of peers (details of peers' interaction protocol), the quotas
and others. If we use as time unit the mean interval between two
attempts of a given peer to change one of its neighbors, then
preliminary results show that convergence is logarithmic at best, and
polynomial at worst. We plan on providing a complete study on the
influence of parameters. This should help understanding existing
protocols and making them more efficient.

\paragraph{Dynamics of preference systems} 
We have considered fixed acceptance graph and preference lists. In real applications, arrivals and departures modify the acceptance graph, along with the discovery of new contacts (a toy example is BitTorrent, where a tracker periodically gives new contacts to the clients). 
The preference system itself can evolve in time. For instance, latency
can increase if a corresponding link has a congestion problem. A complementary preference system is dynamic by itself: as a peer gets resources from a complementary peer, the complementarity mark decreases.
 
All these changes impact the stable configuration of the system. The question is to know whether the convergence speed can sustain the dynamics of preferences or not. Fast convergence and slow changes allow the system to continuously adjust (or stay close) to the current stable configuration. Otherwise, the configurations of the system may be always far from a stable configuration that changes too often. The preferable behavior depends on whether stability is a good feature. This is an interesting direction for the future work.

\section{Conclusion}
\label{sec:conclusion}

In this article, we gave formal definitions for a $b$-matching P2P model and analyze the existence of a stable configuration with preference systems natural to P2P environment. The term stability in our case corresponds to Pareto efficiency of the collaboration network, since the participants have no incentives to change such links. We also showed that in contrast with systems based on intrinsic capacities, a latency-based stable configuration has small-world characteristics.

\bibliographystyle{plain}

\bibliography{maria}
 

\end{document}